\providecommand{\U}[1]{\protect\rule{.1in}{.1in}}
\newtheorem{theorem}{Theorem}
\newtheorem{corollary}[theorem]{Corollary}
\newtheorem{lemma}[theorem]{Lemma}
\newtheorem{remark}[theorem]{Remark}
\newenvironment{proof}[1][Proof]{\noindent\textbf{#1.} }{\ \rule{0.5em}{0.5em}}
\let\pdfoutput=\undefined\fi
\begin{document}

\title{\textbf{Analytical} \textbf{Blowup Solutions to the Pressureless
Navier-Stokes-Poisson Equations with Density-dependent Viscosity in }$R^{N}$}
\author{Y\textsc{uen} M\textsc{anwai\thanks{E-mail address: nevetsyuen@hotmail.com }}\\\textit{Department of Applied Mathematics, }\\\textit{The Hong Kong Polytechnic University,}\\\textit{Hung Hom, Kowloon, Hong Kong}}
\date{Revised 24-Apr-2009}
\maketitle

\begin{abstract}
We study the N-dimensional pressureless Navier--Stokes-Poisson equations with
density-dependent viscosity. With the extension of the blowup solutions for
the Euler-Poisson equations, the analytical blowup solutions, in radial
symmetry, in $R^{N}$ ($N\geq2$) are constructed.

\end{abstract}

\section{Introduction}

The evolution of a self-gravitating fluid can be formulated by the
Navier-Stokes-Poisson equations of the following form:
\begin{equation}
\left\{
\begin{array}
[c]{rl}%
{\normalsize \rho}_{t}{\normalsize +\nabla\bullet(\rho\vec{u})} &
{\normalsize =}{\normalsize 0,}\\
{\normalsize (\rho\vec{u})}_{t}{\normalsize +\nabla\bullet(\rho\vec{u}%
\otimes\vec{u})+\nabla P} & {\normalsize =}{\normalsize -\rho\nabla
\Phi+vis(\rho,\vec{u}),}\\
{\normalsize \Delta\Phi(t,x)} & {\normalsize =\alpha(N)}{\normalsize \rho,}%
\end{array}
\right.  \label{Euler-Poisson}%
\end{equation}
where $\alpha(N)$ is a constant related to the unit ball in $R^{N}$:
$\alpha(1)=2$; $\alpha(2)=2\pi$ and For $N\geq3,$%
\begin{equation}
\alpha(N)=N(N-2)V(N)=N(N-2)\frac{\pi^{N/2}}{\Gamma(N/2+1)},
\end{equation}
where $V(N)$ is the volume of the unit ball in $R^{N}$ and $\Gamma$ is a Gamma
function. And as usual, $\rho=\rho(t,x)$ and $\vec{u}=\vec{u}(t,x)\in
\mathbf{R}^{N}$ are the density, the velocity respectively. $P=P(\rho)$\ is
the pressure. The evolution of the cosmology can be modelled by the dust
distribution without pressure term. That describes the stellar systems of
collisionless, gravitational N-body systems.

In the above system, the self-gravitational potential field $\Phi=\Phi
(t,x)$\ is determined by the density $\rho$ through the Poisson equation.

And $vis(\rho,\vec{u})$ is the viscosity function:%
\begin{equation}
vis(\rho,\vec{u}):=\bigtriangledown(\mu(\rho)\bigtriangledown\bullet\vec{u}).
\label{pp1}%
\end{equation}
In this article, we seek the radial solutions
\begin{equation}
\rho(t,x)=\rho(t,r)\text{ and }\vec{u}=\frac{\vec{x}}{r}V(t,r):=\frac{\vec{x}%
}{r}V\text{,}%
\end{equation}
with $r=\left(  \sum_{i=1}^{N}x_{i}^{2}\right)  ^{1/2}.$ In particular, we can
deduce the viscosity function (\ref{pp1}) to be%
\begin{align}
\bigtriangledown(\mu(\rho)\bigtriangledown\bullet\vec{u})  &  =\nabla\left[
\mu(\rho)\left(  \left(  \frac{\partial}{\partial x_{1}},\frac{\partial
}{\partial x_{2}},...,\frac{\partial}{\partial x_{N}}\right)  \cdot\left(
\frac{x_{1}}{r},\frac{x_{2}}{r},...,\frac{x_{N}}{r}\right)  V\right)  \right]
\\
&  =\nabla\left[  \mu(\rho)\overset{N}{\underset{k=1}{\sum}}\frac{\partial
}{\partial x_{k}}\left(  \frac{x_{k}}{r}V\right)  \right] \\
&  =\nabla\left[  \mu(\rho)\overset{n}{\underset{k=1}{\sum}}\left(
\frac{r^{2}-x_{k}^{2}}{r^{3}}V+\frac{x_{k}}{r}V_{r}\frac{x_{k}}{r}\right)
\right] \\
&  =\nabla\left[  \mu(\rho)\left(  \frac{(N-1)r^{2}}{r^{3}}V+V_{r}\right)
\right] \\
&  =\nabla\left[  \mu(\rho)\left(  \frac{N-1}{r}V+V_{r}\right)  \right] \\
&  =\nabla\mu(\rho)(\frac{N-1}{r}V+V_{r})+\mu(\rho)\nabla(\frac{N-1}{r}%
V+V_{r})\\
&  =\frac{\partial}{\partial x_{i}}\mu_{r}\left(  \rho\right)  \left(
\frac{N-1}{r}V+V_{r}\right)  +\mu(\rho)\frac{\partial}{\partial x_{i}}%
(\frac{N-1}{r}V+V_{r})\\
&  =\frac{\partial r}{\partial x_{i}}\mu_{r}\left(  \rho\right)  \left(
\frac{N-1}{r}V+V_{r}\right)  +\mu(\rho)\left(  -\left(  \frac{N-1}{r^{2}%
}\right)  \frac{\partial r}{\partial x_{i}}V+\frac{N-1}{r}\frac{\partial
r}{\partial x_{i}}V_{r}+\frac{\partial r}{\partial x_{i}}V_{rr}\right) \\
&  =\frac{x_{i}}{r}\mu_{r}\left(  \rho\right)  \left(  \frac{N-1}{r}%
V+V_{r}\right)  +\mu(\rho)\left(  -\left(  \frac{N-1}{r^{2}}\right)
\frac{x_{i}}{r}V+\frac{N-1}{r}\frac{x_{i}}{r}V_{r}+\frac{x_{i}}{r}%
V_{rr}\right) \\
&  =\frac{x_{i}}{r}\left\{  \mu_{r}\left(  \rho\right)  \left(  \frac{N-1}%
{r}V+V_{r}\right)  +\mu(\rho)\left[  -\left(  \frac{N-1}{r^{2}}\right)
V+\frac{N-1}{r}V_{r}+V_{rr}\right]  \right\}
\end{align}
The equations (\ref{pp1}) can be converted to the radial solutions to have%
\begin{align}
&  \rho\left[  u_{t}+\left(  u\cdot\nabla\right)  u\right]  +\nabla
P(\rho)+\rho\nabla\Phi\\
&  =\rho\left(  \left(  u_{i}\right)  _{t}+\overset{n}{\underset{k=1}{%
{\textstyle\sum}
}}u_{k}\frac{\partial u_{i}}{\partial x_{k}}\right)  +\frac{\partial P(\rho
)}{\partial x_{i}}+\rho\frac{\partial\Phi\left(  \rho\right)  }{\partial
x_{i}}\text{, for }i=1,2,...N\\
&  =\rho\left(  \left(  u_{i}\right)  _{t}+u_{i}\frac{\partial u_{i}}{\partial
x_{i}}+\underset{k\neq i}{%
{\textstyle\sum}
}u_{k}\frac{\partial u_{i}}{\partial x_{k}}\right)  +\frac{\partial P(\rho
)}{\partial x_{i}}+\rho\frac{\partial\Phi\left(  \rho\right)  }{\partial
x_{i}}\\
&  =\rho\left[  \frac{\partial}{\partial t}\left(  \frac{x_{i}}{r}V\right)
+\frac{x_{i}}{r}V\frac{\partial}{\partial x_{i}}\left(  \frac{x_{i}}%
{r}\ V\right)  +\underset{k\neq i}{%
{\textstyle\sum}
}\frac{x_{k}}{r}\ V\frac{\partial}{\partial x_{k}}\left(  \frac{x_{i}}%
{r}\ V\right)  \right]  +\left(  \frac{\partial P(\rho)}{\partial r}+\rho
\frac{\partial\Phi\left(  \rho\right)  }{\partial r}\right)  \frac{\partial
r}{\partial x_{i}}\\
&  =\rho\left[  \frac{x_{i}}{r}V_{t}+\frac{x_{i}}{r}V\left(  \frac{r^{2}%
-x_{i}^{2}}{r^{3}}V+\frac{x_{i}}{r}V_{r}\frac{x_{i}}{r}\right)  \right]
+\underset{k\neq i}{%
{\textstyle\sum}
}\frac{x_{k}}{r}V\left(  \frac{-x_{i}x_{k}}{r^{3}}V+\frac{x_{i}x_{k}}{r^{2}%
}V_{r}\right)  +\left[  P_{r}(\rho)+\rho\Phi_{r}\left(  \rho\right)  \right]
\frac{x_{i}}{r}\\
&  =\frac{x_{i}}{r}\left\{  \rho\left[  V_{t}+\left(  \frac{r^{2}-x_{i}^{2}%
}{r^{3}}-\underset{k\neq i}{%
{\textstyle\sum}
}\frac{x_{k}^{2}}{r^{3}}\right)  V^{2}+\left(  \frac{x_{i}^{2}}{r^{3}%
}+\underset{k\neq i}{%
{\textstyle\sum}
}\frac{x_{k}^{2}}{r^{3}}\right)  VV_{r}\right]  +P_{r}(\rho)+\rho\Phi
_{r}\left(  \rho\right)  \right\} \\
&  =\frac{x_{i}}{r}\left[  \rho\left(  V_{t}+VV_{r}\right)  +P_{r}(\rho
)+\rho\Phi_{r}\left(  \rho\right)  \right]  .
\end{align}
Therefore, the Euler-Poisson equations in radial symmetry can be written in
the following form%
\begin{equation}
\left\{
\begin{array}
[c]{c}%
\rho_{t}+V\rho_{r}+\rho V_{r}+\dfrac{N-1}{r}\rho V=0,\\
\rho\left(  V_{t}+VV_{r}\right)  +P_{r}(\rho)+\rho\Phi_{r}\left(  \rho\right)
=\mu_{r}\left(  \rho\right)  \left(  \frac{N-1}{r}V+V_{r}\right)  +\mu
(\rho)\left[  -\left(  \frac{N-1}{r^{2}}\right)  V+\frac{N-1}{r}V_{r}%
+V_{rr}\right]  .
\end{array}
\right.
\end{equation}
Here we under a common assumption for:
\begin{equation}
\mu(\rho):=\kappa\rho^{\theta},
\end{equation}
and $\kappa$ and $\theta\geq0$ are the constants. For the study of this kind
of the above system, the readers may refer \cite{LL}, \cite{MV}, \cite{Ni},
\cite{YZ} and \cite{Y1}. In particular, when $\theta=0$, it returns the
expression for the $V$-dependent only viscosity function:%
\begin{equation}
\left\{
\begin{array}
[c]{c}%
\rho_{t}+V\rho_{r}+\rho V_{r}+\dfrac{N-1}{r}\rho V=0,\\
\rho\left(  V_{t}+VV_{r}\right)  +P_{r}(\rho)+\rho\Phi_{r}\left(  \rho\right)
=\mu(\rho)\left[  V_{rr}+\frac{N-1}{r}V_{r}-\left(  \frac{N-1}{r^{2}}\right)
V\right]  ,
\end{array}
\right.
\end{equation}
which are the usual form Navier-Stokes-Poisson equations. The equations
(\ref{Euler-Poisson})$_{1}$ and (\ref{Euler-Poisson})$_{2}$ $(vis(\rho
,u)\neq0)$ are the compressible Navier-Stokes equations with forcing term. The
equation (\ref{Euler-Poisson})$_{3}$ is the Poisson equation through which the
gravitational potential is determined by the density distribution of the
density itself. Thus, we call the system (\ref{Euler-Poisson}) the
Navier--Stokes-Poisson equations.

Here, if the $vis(\rho,u)=0$, the system is called the Euler-Poisson
equations.\ In this case, the equations can be viewed as a prefect gas model.
For $N=3$, (\ref{Euler-Poisson}) is a classical (nonrelativistic) description
of a galaxy, in astrophysics. See \cite{C}, \cite{M1} for a detail about the system.

$P=P(\rho)$\ is the pressure. The $\gamma$-law can be applied on the pressure
$P(\rho)$, i.e.%
\begin{equation}
{\normalsize P}\left(  \rho\right)  {\normalsize =K\rho}^{\gamma}%
:=\frac{{\normalsize \rho}^{\gamma}}{\gamma}, \label{gamma}%
\end{equation}
which is a commonly the hypothesis. The constant $\gamma=c_{P}/c_{v}\geq1$,
where $c_{P}$, $c_{v}$\ are the specific heats per unit mass under constant
pressure and constant volume respectively, is the ratio of the specific heats,
that is, the adiabatic exponent in (\ref{gamma}). With $K=0$, we call that the
system is pressureless.

For the $3$-dimensional case, we are interested in the hydrostatic equilibrium
specified by $u=0$. According to \cite{C}, the ratio between the core density
$\rho(0)$ and the mean density $\overset{\_}{\rho}$ for $6/5<\gamma<2$\ is
given by%
\begin{equation}
\frac{\overset{\_}{\rho}}{\rho(0)}=\left(  \frac{-3}{z}\dot{y}\left(
z\right)  \right)  _{z=z_{0}}%
\end{equation}
where $y$\ is the solution of the Lane-Emden equation with $n=1/(\gamma-1)$,%
\begin{equation}
\ddot{y}(z)+\dfrac{2}{z}\dot{y}(z)+y(z)^{n}=0,\text{ }y(0)=\alpha>0,\text{
}\dot{y}(0)=0,\text{ }n=\frac{1}{\gamma-1},
\end{equation}
and $z_{0}$\ is the first zero of $y(z_{0})=0$. We can solve the Lane-Emden
equation analytically for%
\begin{equation}
y_{anal}(z)\doteq\left\{
\begin{array}
[c]{ll}%
1-\frac{1}{6}z^{2}, & n=0;\\
\dfrac{\sin z}{z}, & n=1;\\
\dfrac{1}{\sqrt{1+z^{2}/3}}, & n=5,
\end{array}
\right.
\end{equation}
and for the other values, only numerical values can be obtained. It can be
shown that for $n<5$, the radius of polytropic models is finite; for $n\geq5$,
the radius is infinite.\newline For the isothermal case ($\gamma=1$), the
corresponding stationary solution is the Liouville equation,%
\begin{equation}
\ddot{y}(z){\normalsize +}\dfrac{2}{z}\dot{y}(z){\normalsize +\dfrac
{\alpha(3)}{K}e}^{y(z)}=0.
\end{equation}
Gambin \cite{G} and Bezard \cite{B} obtained the existence results about the
explicitly stationary solution $\left(  u=0\right)  $ for $\gamma=6/5$ in
Euler-Poisson equations$:$%
\begin{equation}
\rho=\left(  \frac{3KA^{2}}{2\pi}\right)  ^{5/4}\left(  1+A^{2}r^{2}\right)
^{-5/2}, \label{stationsoluionr=6/5}%
\end{equation}
where $A$ is constant.

In the following, we always seek solutions in radial symmetry. Thus, the
Poisson equation (\ref{Euler-Poisson})$_{3}$ is transformed to%
\begin{equation}
{\normalsize r^{N-1}\Phi}_{rr}\left(  {\normalsize t,x}\right)  +\left(
N-1\right)  r^{N-2}\Phi_{r}{\normalsize =}\alpha\left(  N\right)
{\normalsize \rho r^{N-1},}%
\end{equation}%
\begin{equation}
\Phi_{r}=\frac{\alpha\left(  N\right)  }{r^{N-1}}\int_{0}^{r}\rho
(t,s)s^{N-1}ds.
\end{equation}
In this paper, we concern about blowup solutions for the $N$-dimensional
pressureless Navier-Stokes-Poisson equations with the density-dependent
viscosity. And our aim is to construct a family of such blowup solutions.

Historically in astrophysics, Goldreich and Weber \cite{GW} constructed the
analytical blowup solutions (collapsing) of the $3$-dimensional Euler-Poisson
equations for $\gamma=4/3$ for the non-rotating gas spheres. After that,
Makino \cite{M1} obtained the rigorously mathematical proof of the existence
of such kind of blowup solutions. And in \cite{DXY}, Deng, Xiang and Yang
extended the above blowup solutions in $R^{N}$ ($N\geq4$). Recently, Yuen
obtained the blowup solutions in $R^{2}$ with $\gamma=1$ in \cite{Y1}. The
family of the analytical solutions are rewritten as

For $N\geq3$ and $\gamma=(2N-2)/N$, in \cite{Y}
\begin{equation}
\left\{
\begin{array}
[c]{c}%
\rho(t,r)=\left\{
\begin{array}
[c]{c}%
\dfrac{1}{a(t)^{N}}y(\frac{r}{a(t)})^{N/(N-2)},\text{ for }r<a(t)Z_{\mu};\\
0,\text{ for }a(t)Z_{\mu}\leq r.
\end{array}
\right.  \text{, }V{\normalsize (t,r)=}\dfrac{\dot{a}(t)}{a(t)}%
{\normalsize r,}\\
\ddot{a}(t){\normalsize =}-\dfrac{\lambda}{a(t)^{N-1}},\text{ }%
{\normalsize a(0)=a}_{0}>0{\normalsize ,}\text{ }\dot{a}(0){\normalsize =a}%
_{1},\\
\ddot{y}(z){\normalsize +}\dfrac{N-1}{z}\dot{y}(z){\normalsize +}\dfrac
{\alpha(N)}{(2N-2)K}{\normalsize y(z)}^{N/(N-2)}{\normalsize =\mu,}\text{
}y(0)=\alpha>0,\text{ }\dot{y}(0)=0,
\end{array}
\right.  \label{solution2}%
\end{equation}
where $\mu=[N(N-2)\lambda]/(2N-2)K$ and the finite $Z_{\mu}$ is the first zero
of $y(z)$;

For $N=2$ and $\gamma=1$, in \cite{Y1}%
\begin{equation}
\left\{
\begin{array}
[c]{c}%
\rho(t,r)=\dfrac{1}{a(t)^{2}}e^{y(r/a(t))}\text{, }V{\normalsize (t,r)=}%
\dfrac{\dot{a}(t)}{a(t)}{\normalsize r;}\\
\ddot{a}(t){\normalsize =}-\dfrac{\lambda}{a(t)},\text{ }{\normalsize a(0)=a}%
_{0}>0{\normalsize ,}\text{ }\dot{a}(0){\normalsize =a}_{1};\\
\ddot{y}(z){\normalsize +}\dfrac{1}{z}\dot{y}(z){\normalsize +\dfrac
{\alpha(2)}{K}e}^{y(z)}{\normalsize =\mu,}\text{ }y(0)=\alpha,\text{ }\dot
{y}(0)=0,
\end{array}
\right.  \label{solution 3}%
\end{equation}
where $K>0$, $\mu=2\lambda/K$ with a sufficiently small $\lambda$ and $\alpha$
are constants.\newline For the construction of special analytical solutions to
the Navier-Stokes equations in $R^{N}$ without force term, readers may refer
Yuen's recent results in \cite{Y2}.\newline In this article, the analytical
blowup solutions are constructed in the pressureless Euler-Poisson equations
with density-dependent viscosity in $R^{N}$ in radial symmetry:%
\begin{equation}
\left\{
\begin{array}
[c]{rl}%
\rho_{t}+V\rho_{r}+\rho V_{r}+{\normalsize \dfrac{N-1}{r}\rho V} &
{\normalsize =0,}\\
\rho\left(  V_{t}+VV_{r}\right)  & {\normalsize =-}\dfrac{\alpha(N)\rho
}{r^{N-1}}%
{\displaystyle\int_{0}^{r}}
\rho(t,s)s^{N-1}ds+[\kappa\rho^{\theta}]_{r}\left(  \frac{N-1}{r}%
V+V_{r}\right)  +(\kappa\rho^{\theta})(V_{rr}+\dfrac{N-1}{r}V_{r}+\dfrac
{N-1}{r^{2}}V),
\end{array}
\right.  \label{gamma=1}%
\end{equation}
in the form of the following theorem.

\begin{theorem}
\label{thm:1}For the $N$-dimensional pressureless Navier-Stokes-Poisson
equations in radial symmetry, (\ref{gamma=1}), there exists a family of
solutions:\newline for $N\geq2$ and $N\neq3$, with $\theta=(2N-3)/N$, we
have,
\begin{equation}
\left\{
\begin{array}
[c]{c}%
\rho(t,r)=\left\{
\begin{array}
[c]{c}%
\dfrac{1}{(T-Ct)^{N}}y(\frac{r}{T-Ct})^{\frac{N}{N-3}},\text{ for }\frac
{r}{T-Ct}<Z_{0};\\
0,\text{ for }Z_{0}\leq\frac{r}{T-ct}.
\end{array}
\right.  ,\text{ }V{\normalsize (t,r)=}\dfrac{-C}{T-Ct}{\normalsize r;}\\
\ddot{y}(z){\normalsize +}\dfrac{N-1}{z}\dot{y}(z){\normalsize +\dfrac
{\alpha(N)(N-3)}{(2N-3)CN\kappa}y(z)}^{N/(N-3)}{\normalsize =0,}\text{
}y(0)=\alpha>0,\text{ }\dot{y}(0)=0
\end{array}
\right.  ; \label{eq11123}%
\end{equation}
for $N=3$, with $\theta=1$, we have,%
\begin{equation}
\left\{
\begin{array}
[c]{c}%
\rho(t,r)=\dfrac{1}{(T-Ct)^{3}}e^{y(r/(T-Ct))}\text{, }V{\normalsize (t,r)=}%
\dfrac{-C}{T-Ct}{\normalsize r;}\\
\ddot{y}(z){\normalsize +}\dfrac{2}{z}\dot{y}(z){\normalsize +\dfrac{4\pi
}{CN\kappa}e}^{y(z)}{\normalsize =0,}\text{ }y(0)=\alpha,\text{ }\dot{y}(0)=0,
\end{array}
\right.  \label{eq11123A}%
\end{equation}
where $T>0$, $\kappa>0$, $C\neq0$ and $\alpha$ are constants, and the finite
$Z_{0}$ is the first zero of $y(z)$. In particular, for $C>0$, the solutions
blow up in the finite time $T/C$.
\end{theorem}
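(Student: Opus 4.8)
The plan is to verify Theorem~\ref{thm:1} by direct substitution of the proposed profiles into the already-reduced radial system (\ref{gamma=1}), exploiting the separable structure $\rho(t,r)=a(t)^{-N}F(r/a(t))$, $V(t,r)=(\dot a/a)\,r$ with $a(t)=T-Ct$. First I would record the two elementary facts that carry the argument. For this ansatz the terms involving $F'$ in $\rho_t+V\rho_r$ cancel and the remaining ones combine as $(-N+1+(N-1))\,a^{-N-1}\dot a\,F=0$, so the continuity equation $\rho_t+V\rho_r+\rho V_r+\frac{N-1}{r}\rho V=0$ holds identically for \emph{any} $a$. Second, $V_t+VV_r=(\ddot a/a)\,r$, and here this \emph{vanishes} because $a$ is affine in $t$ (this is exactly why the pressureless case produces a linear $a$ rather than an Emden-type equation for $a$). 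Consequently the left-hand side $\rho(V_t+VV_r)$ of (\ref{gamma=1})$_{2}$ is identically zero, and the whole content of the theorem is that the right-hand side also vanishes.

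Next I would dispatch the pure-Laplacian viscosity contribution: since $V$ is linear in $r$, a one-line computation gives $V_{rr}+\frac{N-1}{r}V_r-\frac{N-1}{r^2}V=0$ (the radial form of $\Delta$ applied to a linear velocity field), so that summand is zero — which, incidentally, forces the sign of the $V/r^2$ term to be the one obtained in the derivation above. It then remains to cancel the self-gravity term $-\frac{\alpha(N)\rho}{r^{N-1}}\int_0^r\rho\,s^{N-1}\,ds$ against $[\kappa\rho^{\theta}]_r\big(\frac{N-1}{r}V+V_r\big)$. Changing variables $\sigma=s/a$, $\xi=r/a$ gives $\int_0^r\rho\,s^{N-1}ds=\int_0^\xi y(\sigma)^{N/(N-3)}\sigma^{N-1}d\sigma$; writing the $y$-equation as $(z^{N-1}\dot y)'=-\frac{\alpha(N)(N-3)}{(2N-3)CN\kappa}z^{N-1}y^{N/(N-3)}$ and integrating from $0$ (using $\dot y(0)=0$) converts this nonlocal integral into $-\frac{(2N-3)CN\kappa}{\alpha(N)(N-3)}\xi^{N-1}\dot y(\xi)$, so the gravity term becomes $\frac{(2N-3)CN\kappa}{N-3}a^{-(2N-1)}y(\xi)^{N/(N-3)}\dot y(\xi)$. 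On the other side, the choice $\theta=(2N-3)/N$ is dictated by the exponent bookkeeping $(2N-3)/(N-3)-1=N/(N-3)$: it gives $\kappa\rho^{\theta}=\kappa a^{-(2N-3)}y^{(2N-3)/(N-3)}$, hence $[\kappa\rho^{\theta}]_r=\kappa\frac{2N-3}{N-3}a^{-(2N-2)}y^{N/(N-3)}\dot y$, and multiplying by $\frac{N-1}{r}V+V_r=N\dot a/a=-NC/a$ yields precisely $-\frac{(2N-3)CN\kappa}{N-3}a^{-(2N-1)}y^{N/(N-3)}\dot y$, the negative of the gravity term. Both sides of the momentum equation are therefore $0$.

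For $N=3$ the exponent $N/(N-3)$ degenerates, and the power profile must be replaced by the exponential one $\rho=a^{-3}e^{y(\xi)}$ (with $\theta=1$, which is still the value $(2N-3)/N$ at $N=3$); repeating the three steps — continuity automatic, $\Delta$-viscosity term zero, and gravity $\leftrightarrow[\kappa\rho]_r$ cancellation now via $(z^2\dot y)'=-\frac{\alpha(3)}{CN\kappa}z^2 e^{y}$ — produces exactly (\ref{eq11123A}), using $\alpha(3)=4\pi$. The blowup assertion is then immediate: for $C>0$, $a(t)=T-Ct\downarrow 0$ as $t\uparrow T/C$, so $\rho(t,0)=\alpha^{N/(N-3)}(T-Ct)^{-N}\to\infty$ (resp. $e^{\alpha}(T-Ct)^{-3}\to\infty$ when $N=3$); the density concentrates at the origin in finite time.

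Everything above is mechanical once the ansatz is written down; I expect the only genuinely delicate point to be the \emph{qualitative solvability} of the singular initial value problem for $y$ — that it admits a solution with $y(0)=\alpha>0$, $\dot y(0)=0$ and, for the profile to have the claimed compact support, a finite first zero $Z_0$. Local existence at the regular-singular point $z=0$ is standard, but whether $Z_0<\infty$ depends on $N$ (and on the sign of the ODE coefficient, hence on the signs of $C$ and $\kappa$) in the manner familiar from the Lane--Emden/Emden--Fowler and Liouville equations recalled in the introduction; this is where one must be careful about the admissible parameter ranges. A secondary, routine point is that across $r=a(t)Z_0$ the solution is to be understood in the weak sense, exactly as in \cite{M1}.
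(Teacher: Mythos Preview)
Your proposal is correct and follows the same verification-by-substitution strategy as the paper: check the continuity equation for the separable ansatz, observe that both $\rho(V_t+VV_r)$ and the pure-Laplacian viscosity term vanish (the first because $a(t)=T-Ct$ is affine, the second because $V$ is linear in $r$), and then reduce the momentum equation to a balance between the self-gravity integral and the $[\kappa\rho^{\theta}]_r\bigl(\tfrac{N-1}{r}V+V_r\bigr)$ term governed by the $y$-ODE. The only difference is tactical. The paper packages that balance into an auxiliary function
\[
Q(z)=\frac{(2N-3)CN\kappa}{N-3}\,\dot y(z)+\frac{\alpha(N)}{z^{N-1}}\int_0^z y(s)^{N/(N-3)}s^{N-1}\,ds,
\]
shows $\dot Q=-\tfrac{N-1}{z}Q$ with $Q(0)=0$, and concludes $Q\equiv 0$ (and similarly for $N=3$). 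You instead write the $y$-equation in self-adjoint form $(z^{N-1}\dot y)'=-c\,z^{N-1}y^{N/(N-3)}$ and integrate once from $0$ to evaluate the gravity integral explicitly, reading off the cancellation directly. Your route is marginally cleaner---it is exactly the integrated content of the paper's ODE for $Q$---while the paper's device avoids naming the antiderivative. Your side remarks (the sign of the $V/r^2$ term, the need for Lane--Emden/Liouville theory to guarantee a finite $Z_0$ in the appropriate parameter ranges, and the weak interpretation across $r=a(t)Z_0$) are apt and in fact go slightly beyond what the paper makes explicit.
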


\section{Separable Blowup Solutions}

Before presenting the proof of Theorem \ref{thm:1}, we prepare some lemmas.
First, we obtain the solutions for the continuity equation of mass in radial
symmetry (\ref{gamma=1})$_{1}$.

\begin{lemma}
\label{lem:generalsolutionformasseq}For the N-dimensional conservation of mass
in radial symmetry
\begin{equation}
\rho_{t}+V\rho_{r}+\rho V_{r}+\dfrac{N-1}{r}\rho V=0,
\label{massequationspherical}%
\end{equation}
there exist solutions,%
\begin{equation}
\rho(t,r)=\frac{f(r/(T-Ct))}{(T-Ct)^{N}},\text{ }V{\normalsize (t,r)=}%
\frac{-C}{T-Ct}{\normalsize r,} \label{generalsolutionformassequation}%
\end{equation}
where $f\geq0\in C^{1}$, $T$ and $C$ are positive constants.
\end{lemma}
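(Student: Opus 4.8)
The plan is to prove the lemma by direct substitution, checking that the pair in (\ref{generalsolutionformassequation}) solves (\ref{massequationspherical}) for an \emph{arbitrary} $f\in C^{1}$. Set $a(t):=T-Ct$, so that $\dot{a}(t)=-C$ and the velocity becomes $V(t,r)=(\dot{a}(t)/a(t))\,r$, and introduce the self-similar variable $z:=r/a(t)$, in terms of which $\rho(t,r)=a(t)^{-N}f(z)$. The only bookkeeping point is that $z$ depends on both variables, with $\partial_{t}z=-z\dot{a}/a$ and $\partial_{r}z=1/a$.

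First I would differentiate by the chain rule: $\rho_{t}=a^{-N-1}(-N\dot{a}\,f-\dot{a}\,z\dot{f})$ and $\rho_{r}=a^{-N-1}\dot{f}$, while $V_{r}=\dot{a}/a$ gives $\rho V_{r}=a^{-N-1}\dot{a}\,f$; similarly $V\rho_{r}=a^{-N-1}\dot{a}\,z\dot{f}$ and $\frac{N-1}{r}\rho V=(N-1)a^{-N-1}\dot{a}\,f$. Adding the four terms, the two contributions proportional to $\dot{f}$ cancel between $\rho_{t}$ and $V\rho_{r}$, and the coefficient multiplying $a^{-N-1}\dot{a}\,f$ is $-N+1+(N-1)=0$. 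Hence the left-hand side of (\ref{massequationspherical}) vanishes identically, which is the assertion.

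A slightly cleaner route, which I might present instead, is to first rewrite (\ref{massequationspherical}) in the divergence form $\partial_{t}(\rho r^{N-1})+\partial_{r}(\rho V r^{N-1})=0$. Under the ansatz one finds $\rho r^{N-1}=a^{-1}g(z)$ and $\rho V r^{N-1}=a^{-1}\dot{a}\,z\,g(z)$, where $g(z):=z^{N-1}f(z)$, and then $\partial_{t}(a^{-1}g(z))=-a^{-2}\dot{a}\,(g+zg')$ cancels exactly against $\partial_{r}(a^{-1}\dot{a}\,z\,g(z))=a^{-2}\dot{a}\,(g+zg')$.

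I do not expect a genuine obstacle here: the whole content is a routine chain-rule verification, and the hypotheses are precisely what is needed for it — $f\in C^{1}$ makes all derivatives classical, $f\ge0$ ensures $\rho\ge0$ is an admissible density, and $T,C>0$ guarantee $a(t)=T-Ct>0$ on the interval $[0,T/C)$ on which the solution is defined. This lemma will then feed into the proof of Theorem \ref{thm:1} by choosing $f$ to be a suitable power (or exponential) of the Lane--Emden/Liouville-type profile $y$, so that the momentum equation (\ref{gamma=1})$_{2}$ is also satisfied.
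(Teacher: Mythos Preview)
Your proof is correct and follows the same direct-substitution strategy as the paper, merely packaged more cleanly via the abbreviations $a(t)=T-Ct$ and $z=r/a(t)$. The paper's own argument expands each of $\rho_{t}$, $V\rho_{r}$, $\rho V_{r}$, $\tfrac{N-1}{r}\rho V$ explicitly in terms of $f$ and $\dot f$ and observes the cancellation; your divergence-form alternative is a mild stylistic variant of the same computation.
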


\begin{proof}
We just plug (\ref{generalsolutionformassequation}) into
(\ref{massequationspherical}). Then
\begin{align}
&  \rho_{t}+V\rho_{r}+\rho V_{r}+\frac{N-1}{r}\rho V\\
&  =\frac{(-N)(-C)f(r/(T-Ct))}{(T-Ct)^{N+1}}-\frac{(-C)r\overset{\cdot}%
{f}(r/(T-Ct))}{(T-Ct)^{N+2}}\\
&  +\frac{(-C)r}{T-Ct}\frac{\overset{\cdot}{f}(r/(T-Ct))}{(T-Ct)^{N+1}}%
+\frac{f(r/(T-Ct))}{(T-Ct)^{N}}\frac{(-C)}{T-Ct}+\frac{N-1}{r}\frac
{f(r/(T-Ct))}{(T-Ct)^{N}}\frac{(-C)}{T-Ct}r\\
&  =0.
\end{align}
The proof is completed.
\end{proof}

Besides, we need the lemma for stating the property of the function $y(z)$ of
the analytical solutions (\ref{eq11123}) and (\ref{eq11123A}). We need the
lemma for stating the property of the function $y(z)$. In particular, the
solutions (\ref{eq11123}) in 2-dimensional case involve the following lemma.
The similar lemma was already given in Lemmas 9 and 10, in \cite{Y1}, by the
fixed point theorem. For the completeness of understanding the whole article,
the proof is also presented here.

\begin{lemma}
\label{lemma2AA}For the ordinary differential equation,%
\begin{equation}
\left\{
\begin{array}
[c]{c}%
\ddot{y}(z){\normalsize +}\dfrac{1}{z}\dot{y}(z)-{\normalsize \frac{\sigma
}{y(z)^{2}}=0}\\
y(0)=\alpha>0,\text{ }\dot{y}(0)=0,
\end{array}
\right.  \label{SecondorderElliptic1}%
\end{equation}
where $\sigma$ is a positive constant,\newline has a solution $y(z)\in C^{2}$
and $\underset{z\rightarrow+\infty}{\lim}y(z)=\infty$.
\end{lemma}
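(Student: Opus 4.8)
The plan is to treat (\ref{SecondorderElliptic1}) as a fixed‑point problem for an integral operator and then extract the growth at infinity from the structure of that operator. First I would rewrite the ODE in self‑adjoint form: multiplying by $z$ gives $(z\dot y)^{\,\cdot}=\sigma z/y^{2}$, so that, using the initial data $y(0)=\alpha$, $\dot y(0)=0$, one obtains the integral identity
\begin{equation}
y(z)=\alpha+\sigma\int_{0}^{z}\frac{1}{s}\int_{0}^{s}\frac{t}{y(t)^{2}}\,dt\,ds .
\end{equation}
I would then set this up on the Banach space $C[0,b]$ (for a small $b>0$) with the sup norm, on the closed convex set $\{\,y:\ \alpha\le y(z)\le 2\alpha\,\}$, and verify that the operator $T y(z):=\alpha+\sigma\int_{0}^{z}s^{-1}\int_{0}^{s}t\,y(t)^{-2}\,dt\,ds$ maps this set into itself and is a contraction for $b$ small; the inner integral is $O(s^{2})$ near $0$, so $Ty$ is genuinely $C^{1}$ (indeed $C^{2}$ by bootstrapping through the ODE) and the double integral is well defined despite the $1/s$. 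This reproduces, as the excerpt notes, the argument of Lemmas 9–10 in \cite{Y1}. Local existence and uniqueness of $y\in C^{2}$ near $z=0$ follow from the Banach fixed point theorem.

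Next I would continue the solution. Since $\dot y(z)=\sigma z^{-1}\int_{0}^{z}t\,y(t)^{-2}\,dt\ge 0$ wherever the solution exists and $y\ge\alpha$, the solution is nondecreasing and stays $\ge\alpha$, so $y^{-2}\le\alpha^{-2}$; hence the right‑hand side of the ODE is bounded on any finite interval, no blowup in $y$ or $\dot y$ can occur in finite $z$, and the solution extends to $[0,\infty)$ (a standard continuation argument — the only way a solution could fail to continue is if $y$ or $\dot y$ became unbounded on a finite interval, which the bound $0\le\dot y\le \sigma z/\alpha^{2}$ precludes). So $y\in C^{2}[0,\infty)$ is monotone and $\ell:=\lim_{z\to\infty}y(z)\in(\alpha,\infty]$ exists.

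Finally I would rule out a finite limit. Suppose for contradiction $\ell<\infty$. Then $y(t)\le\ell$ for all $t$, so from $\dot y(z)=\sigma z^{-1}\int_{0}^{z}t\,y(t)^{-2}\,dt\ge \sigma z^{-1}\ell^{-2}\int_{0}^{z}t\,dt=\tfrac{\sigma}{2\ell^{2}}\,z$. Integrating from $1$ to $z$ gives $y(z)\ge y(1)+\tfrac{\sigma}{4\ell^{2}}(z^{2}-1)\to\infty$, contradicting $y(z)\le\ell$. Therefore $\ell=\infty$, i.e. $\lim_{z\to+\infty}y(z)=\infty$, which is the claim. The main obstacle — and the only delicate point — is the fixed‑point step at $z=0$: one must check that the singular factor $1/s$ in the iterated integral is tamed by the $O(s^{2})$ behaviour of the inner integral so that $T$ really does preserve $C^{1}$ regularity and contracts on a small interval; once local existence in $C^{2}$ is in hand, the global continuation and the growth estimate are routine monotonicity arguments.
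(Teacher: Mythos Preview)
Your proposal is correct and follows essentially the same approach as the paper: both integrate to $\dot y(z)=\sigma z^{-1}\int_0^z t\,y(t)^{-2}\,dt$, use $y\ge\alpha$ to bound $\dot y$ above and thereby continue the solution globally, and then derive a contradiction from any assumed upper bound $y\le\beta$ via $\dot y\ge \sigma z/(2\beta^2)$. The only cosmetic difference is that the paper organises the continuation step as a four--case dichotomy (finite/infinite interval of existence, bounded/unbounded $y$) whereas you argue directly; the estimates are identical.
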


\begin{proof}
By integrating (\ref{SecondorderElliptic1}), we have,%
\begin{equation}
\overset{\cdot}{y}(z)=\frac{\sigma}{z}\int_{0}^{z}\frac{1}{y(s)^{2}}sds\geq0.
\label{lemma3eq1}%
\end{equation}
Thus, for $0<z<z_{0}$, $y(z)$ has a uniform lower upper bound
\begin{equation}
y(z)\geq y(0)=\alpha>0.
\end{equation}
As we obtained the local existence in Lemma \ref{lemma2AA}, there are two
possibilities:\newline(1)$y(z)$ only exists in some finite interval
$[0,z_{0}]$: (1a)$\underset{z\rightarrow z_{0-}}{\lim}y(z)=\infty$; (1b)$y(z)$
has an uniformly upper bound, i.e. $y(z)\leq\alpha_{0}$ for some constant
$\alpha_{0}.$\newline(2)$y(z)$ exists in $[0,$ $+\infty)$: (2a)$\underset
{z\rightarrow+\infty}{\lim}y(z)=\infty$; (2b)$y(z)$ has an uniformly upper
bound, i.e. $y(z)\leq\beta$ for some positive constant $\beta$.\newline We
claim that possibility (1) does not exist. We need to reject (1b) first: If
the statement (1b) is true, (\ref{lemma3eq1}) becomes%
\begin{equation}
\frac{\sigma z}{2\alpha^{2}}=\frac{\sigma}{z}\int_{0}^{z}\frac{s}{\alpha^{2}%
}ds\geq\overset{\cdot}{y}(z). \label{possible1}%
\end{equation}
Thus, $\overset{\cdot}{y}(z)$ is bounded in $[0,z_{0}]$. Therefore, we can use
the fixed point theorem again to obtain a large domain of existence, such that
$[0,z_{0}+\delta]$ for some positive number $\delta$. There is a
contradiction. Therefore, (1b) is rejected.\newline Next, we do not accept
(1a) because of the following reason: It is impossible that $\underset
{z\rightarrow z_{0-}}{\lim}y(z)=\infty$, as from (\ref{possible1}),
$\overset{\cdot}{y}(z)$ has an upper bound in $[0,$ $z_{0}]$:%
\begin{equation}
\frac{\sigma z_{0}}{2\alpha^{2}}\geq\overset{\cdot}{y}(z). \label{lemma3eq2}%
\end{equation}
Thus, (\ref{lemma3eq2}) becomes,
\begin{equation}
y(z_{0})=y(0)+\int_{0}^{z_{0}}\overset{\cdot}{y}(s)ds\leq\alpha+\int
_{0}^{z_{0}}\frac{\sigma z_{0}}{2\alpha^{2}}ds=\alpha+\frac{\sigma(z_{0})^{2}%
}{2\alpha^{2}}%
\end{equation}
Since $y(z)$ is bounded above in $[0,$ $z_{0}]$, it contracts the statement
(1a), such that $\underset{z\rightarrow z_{0-}}{\lim}y(z)=\infty$. So, we can
exclude the possibility (1).\newline We claim that the possibility (2b)
doesn't exist. It is because
\begin{equation}
\overset{\cdot}{y}(z)=\frac{\sigma}{z}\int_{0}^{z}\frac{s}{y(s)^{2}}%
ds\geq\frac{\sigma}{z}\int_{0}^{z}\frac{s}{\beta^{2}}ds=\frac{\sigma z}%
{2\beta^{2}}.
\end{equation}
Then, we have,%
\begin{equation}
y(z)\geq\alpha+\frac{\sigma}{4\beta^{2}}z^{2}. \label{lemma3eq3}%
\end{equation}
By letting $z\rightarrow\infty$, (\ref{lemma3eq3}) turns out to be,
\begin{equation}
y(z)=\infty.
\end{equation}
Since a contradiction is established, we exclude the possibility (2b). Thus,
the equation (\ref{SecondorderElliptic1}) exists in $[0,$ $+\infty)$ and
$\underset{z\rightarrow+\infty}{\lim}y(z)=\infty$. This completes the proof.
\end{proof}

For $N\geq4$, our blowup solutions depend on the Lane-Emden equation,
\begin{equation}
\left\{
\begin{array}
[c]{c}%
\ddot{y}(z)+\dfrac{N-1}{z}\overset{\cdot}{y}(z)+\sigma y(z)^{s}%
{\normalsize =0,}\\
y(0)=\alpha>0,\text{ }\dot{y}(0)=0,
\end{array}
\right.  \label{eee}%
\end{equation}
where $s>1$ and $\sigma$ are positive constants, which is a particular case of
the Emden-Fowler equation%
\begin{equation}
\ddot{u}+z^{1-n}u^{n}=0,
\end{equation}
with $n>1$, after the transformation
\begin{equation}
u=\frac{y(z)}{z}.
\end{equation}
For the existence and uniqueness of the equation (\ref{eee}), the interested
readers may refer the survey paper \cite{Wong}.

The proof of Theorem \ref{thm:1} is similar to the ones in \cite{DXY},
\cite{Y} and \cite{Y1}. The main idea is to put the analytical solutions to
check that if they satisfy the system (\ref{gamma=1}) only.

\begin{proof}
[Proof of Theorem \ref{thm:1}]From Lemma \ref{lem:generalsolutionformasseq},
it is clear for that (\ref{eq11123}) satisfy (\ref{gamma=1})$_{1}$.\newline
For the case of $N\geq2$ and $N\neq3$ with $\theta=(2N-3)/N$, we plug the
solutions (\ref{eq11123}) into the momentum equation (\ref{gamma=1})$_{2}$,%
\begin{align}
&  \rho(V_{t}+VV_{r})+\frac{\alpha(N)\rho}{r^{N-1}}%
{\displaystyle\int\limits_{0}^{r}}
\rho(t,s)s^{N-1}ds-[\kappa\rho^{(2N-3)/N}]_{r}\left(  \frac{N-1}{r}%
V+V_{r}\right)  -\kappa\rho^{(2N-3)/N}(V_{rr}+\dfrac{N-1}{r}V_{r}-\dfrac
{N-1}{r^{2}}V)\\
&  =\rho\left[  \frac{(-C)(-1)(-C)}{(T-Ct)^{2}}r+\frac{(-C)}{T-Ct}r\cdot
\frac{(-C)}{T-Ct}\right]  +\frac{\alpha(N)\rho}{r^{N-1}}%
{\displaystyle\int\limits_{0}^{r}}
\frac{y(\frac{s}{T-Ct})^{N/(N-3)}}{(T-Ct)^{N}}s^{N-1}ds\\
&  -\frac{\partial}{\partial r}N\kappa\left[  \frac{1}{(T-Ct)^{N}}y\left(
\frac{r}{T-Ct}\right)  ^{N/(N-3)}\right]  _{{}}^{(2N-3)/N}\frac{(-C)}%
{T-Ct}-0\\
&  =\frac{\alpha(N)\rho}{r^{N-1}}%
{\displaystyle\int\limits_{0}^{r}}
\frac{y(\frac{s}{T-Ct})^{N/(N-3)}}{(T-Ct)^{N}}s^{N-1}ds+\frac{(2N-3)CN\kappa
}{N-3}\frac{y\left(  \frac{r}{T-Ct}\right)  }{(T-Ct)^{N-3}}\frac{y\left(
\frac{r}{T-Ct}\right)  ^{[N/(N-3)]-1}\overset{\cdot}{y}\left(  \frac{r}%
{T-Ct}\right)  }{(T-Ct)^{N+2}}\\
&  =\frac{\alpha(N)\rho}{r^{N-1}}%
{\displaystyle\int\limits_{0}^{r}}
\frac{y(\frac{s}{T-Ct})^{N/(N-3)}}{(T-Ct)^{N}}s^{N-1}ds+\frac{(2N-3)CN\kappa
}{N-3}\frac{\rho}{(T-Ct)^{N-1}}\overset{\cdot}{y}\left(  \frac{r}{T-Ct}\right)
\\
&  =\frac{\rho}{(T-Ct)^{N-1}}\left[  \frac{(2N-3)CN\kappa}{N-3}\overset{\cdot
}{y}(\frac{r}{T-Ct})+\frac{\alpha(N)}{r^{N-1}(T-Ct)}%
{\displaystyle\int\limits_{0}^{r}}
y(\frac{s}{T-Ct})^{N/(N-3)}s^{N-1}ds\right] \\
&  =\frac{\rho}{(T-Ct)^{N-1}}\left[  \frac{(2N-3)CN\kappa}{N-3}\overset{\cdot
}{y}(\frac{r}{T-Ct})+\frac{\alpha(N)}{(\frac{r}{T-Ct})^{N-1}}%
{\displaystyle\int\limits_{0}^{r/(T-Ct)}}
y(s)^{N/(N-3)}s^{N-1}ds\right] \\
&  =\frac{\rho}{(T-Ct)^{N-1}}Q\left(  \frac{r}{T-Ct}\right)  .
\end{align}
And denote%
\begin{equation}
Q(\frac{r}{T-Ct})\doteq{\normalsize Q(z)=}\frac{(2N-3)CN\kappa}{N-3}%
\overset{\cdot}{y}(z){\normalsize +}\frac{\alpha(N)}{z^{N-1}}%
{\displaystyle\int\limits_{0}^{z}}
y{\normalsize (s)}^{N/(N-3)}{\normalsize s}^{N-1}{\normalsize ds.}%
\end{equation}
Differentiate $Q(z)$\ with respect to $z$,%
\begin{align}
\overset{\cdot}{Q}(z)  &  =\frac{(2N-3)CN\kappa}{N-3}\overset{\cdot\cdot}%
{y}(z){\normalsize +}\alpha(N)y{\normalsize (z)}^{N/(N-3)}-\frac
{(N-1)\alpha(N)}{z^{N-1}}%
{\displaystyle\int\limits_{0}^{z}}
y{\normalsize (s)}^{N/(N-3)}{\normalsize s}^{N-1}{\normalsize ds}\\
&  =-\left(  \frac{N-1}{z}\right)  \frac{(2N-3)CN\kappa}{N-3}%
{\normalsize \overset{\cdot}{y}(z)-}\frac{(N-1)}{z}\frac{\alpha(N)}{z^{N-1}}%
{\displaystyle\int\limits_{0}^{z}}
y{\normalsize (s)}^{N/(N-3)}{\normalsize s}^{N-1}{\normalsize ds}\\
&  =\frac{-(N-1)}{z}Q(z),
\end{align}
where the above result is due to the fact that we choose the following
ordinary differential equation,%
\begin{equation}
\left\{
\begin{array}
[c]{c}%
\ddot{y}(z)+\dfrac{N-1}{z}\overset{\cdot}{y}(z)+\dfrac{\alpha(N)(N-3)}%
{(2N-3)CN\kappa}y(z)^{N/(N-3)}=0.\\
{\normalsize y(0)=\alpha>0,}\text{ }\dot{y}(0){\normalsize =0,}%
\end{array}
\right.
\end{equation}
in Lemma \ref{lemma2AA} and the well-known results in the Emden-Folwer
equation. With $Q(0)=0$, this implies that $Q(z)=0$. Thus, the momentum
equation (\ref{gamma=1})$_{2}$ is satisfied.\newline For the case of $N=3$
with $\theta=1$, we plug the solutions (\ref{eq11123A}) into the momentum
equation (\ref{gamma=1})$_{2}$,%
\begin{align}
&  \rho(V_{t}+VV_{r})+\frac{4\pi\rho}{r^{2}}%
{\displaystyle\int\limits_{0}^{r}}
\rho(t,s)s^{2}ds-[\mu(\rho)]_{r}\left(  \frac{N-1}{r}V+V_{r}\right)  -\mu
(\rho)(V_{rr}+\dfrac{2}{r}V_{r}-\dfrac{2}{r^{2}}V)\\
&  =\rho\left[  \frac{(-C)(-1)(-C)}{(T-Ct)^{2}}r+\frac{(-C)}{T-Ct}r\cdot
\frac{(-C)}{T-Ct}\right]  +\frac{4\pi\rho}{r^{2}}%
{\displaystyle\int\limits_{0}^{r}}
\frac{e^{y(s/(T-Ct))}}{(T-Ct)^{3}}s^{2}ds\\
&  -(\kappa\rho)_{r}\frac{N(-C)}{T-Ct}-\mu(\rho)\left(  0+\dfrac{2}{r}%
\frac{(-1)}{T-Ct}-\frac{2}{r^{2}}\frac{(-1)}{T-Ct}r\right) \\
&  =\frac{\kappa e^{y(r/(T-Ct))}}{(T-Ct)^{3}}\cdot\dot{y}(\frac{r}{T-Ct}%
)\cdot\frac{1}{T-Ct}\frac{NC}{T-Ct}+\frac{4\pi\rho}{r^{2}}%
{\displaystyle\int\limits_{0}^{r}}
\frac{e^{y(s/(T-Ct))}}{(T-Ct)^{3}}s^{2}ds\\
&  =\frac{CN\kappa\rho}{(T-Ct)^{2}}\dot{y}(\frac{r}{T-Ct})+\frac{4\pi\rho
}{r^{2}}%
{\displaystyle\int\limits_{0}^{r}}
\frac{e^{y(s/(T-Ct))}}{(T-Ct)^{3}}s^{2}ds\\
&  =\frac{\rho}{(T-Ct)^{2}}\left[  CN\kappa\dot{y}(\frac{r}{T-Ct})+\frac{4\pi
}{r^{2}(T-Ct)}%
{\displaystyle\int\limits_{0}^{r}}
e^{y\left(  s/(T-Ct)\right)  }s^{2}ds\right]  .
\end{align}
By letting $\omega=s/(T-Ct)$, it follows:%
\begin{align}
&  =\frac{\rho}{(T-Ct)^{2}}\left[  CN\kappa\dot{y}(\frac{r}{T-Ct})+\frac{4\pi
}{(\frac{r}{T-Ct})^{2}}%
{\displaystyle\int\limits_{0}^{r/(T-Ct)}}
e^{y\left(  s\right)  }s^{2}ds\right] \\
&  =\frac{\rho}{(T-Ct)^{2}}Q\left(  \frac{r}{T-Ct}\right)  .
\end{align}
And denote $z=r/(T-Ct)$,%
\begin{equation}
Q(\frac{r}{T-Ct})={\normalsize Q(z)=}CN\kappa\dot{y}(z){\normalsize +}%
\dfrac{4\pi}{z^{2}}%
{\displaystyle\int\limits_{0}^{z}}
e^{y\left(  \omega\right)  }\omega^{2}d\omega{\normalsize .}%
\end{equation}
Differentiate $Q(z)$\ with respect to $z$,%
\begin{align}
\dot{Q}(z)  &  =CN\kappa\ddot{y}(z){\normalsize +4\pi e}^{y(z)}+\frac
{(-2)\cdot4\pi}{z^{3}}%
{\displaystyle\int\limits_{0}^{z}}
e^{y(\omega)}{\normalsize \omega}^{2}{\normalsize d}\omega\\
&  =-\frac{2CN\kappa}{z}\dot{y}(z)+\frac{(-2)4\pi}{z^{3}}%
{\displaystyle\int\limits_{0}^{z}}
e^{y(\omega)}{\normalsize \omega}^{2}{\normalsize d\omega}\\
&  =-\frac{2}{z}Q(z),
\end{align}
where the above result is due to the fact that we choose the Liouville
equation:%
\begin{equation}
\left\{
\begin{array}
[c]{c}%
\ddot{y}(z){\normalsize +}\dfrac{2}{z}\dot{y}(z){\normalsize +\dfrac{4\pi
}{CN\kappa}e}^{y(z)}{\normalsize =0}\\
{\normalsize y(0)=\alpha,}\text{ }\dot{y}(0){\normalsize =0.}%
\end{array}
\right.
\end{equation}
With $Q(0)=0$, this implies that $Q(z)=0$. Now we are able to show that the
family of the solutions blows up, in the finite time $T/C$. This completes the proof.
\end{proof}

The statement about the blowup rate will be immediately followed:

\begin{corollary}
The blowup rate of the solution (\ref{eq11123}) and (\ref{eq11123A}) is
\begin{equation}
\underset{t\rightarrow\frac{T}{C}^{-}}{\lim}\rho(t,0)(T-Ct)^{N}\geq O(1).
\end{equation}

\end{corollary}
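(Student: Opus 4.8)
The plan is straightforward: evaluate the explicit density at the origin and observe that the weighted quantity $\rho(t,0)(T-Ct)^{N}$ is in fact independent of $t$. First I would note that by Lemma \ref{lemma2AA} (together with the Emden--Fowler theory cited after (\ref{eee}) for $N\geq 4$) the profile $y$ is a genuine $C^{2}$ solution with $y(0)=\alpha>0$, so by continuity the origin lies in the region $\{\,r/(T-Ct)<Z_{0}\,\}$ where the nontrivial branch of (\ref{eq11123}) applies. Then for $N\geq 2$, $N\neq 3$,
\[
\rho(t,0)=\frac{1}{(T-Ct)^{N}}\,y(0)^{N/(N-3)}=\frac{\alpha^{N/(N-3)}}{(T-Ct)^{N}},
\]
whence $\rho(t,0)(T-Ct)^{N}=\alpha^{N/(N-3)}$ for every $t<T/C$. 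Since $\alpha>0$ this constant is strictly positive, so the limit as $t\to (T/C)^{-}$ equals it and is in particular bounded below by a positive number, which is precisely the claimed estimate $\lim\rho(t,0)(T-Ct)^{N}\geq O(1)$. For $N=3$ the computation with (\ref{eq11123A}) is identical: $\rho(t,0)=e^{y(0)}/(T-Ct)^{3}=e^{\alpha}/(T-Ct)^{3}$, so $\rho(t,0)(T-Ct)^{3}=e^{\alpha}>0$ for all $t<T/C$, and the same conclusion follows.

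There is essentially no obstacle. The only points worth recording are that for $C>0$ one has $T-Ct\to 0^{+}$ as $t\to (T/C)^{-}$, so $\rho(t,0)\to+\infty$ and $(T-Ct)^{N}$ is indeed the natural normalising rate; and that, because the weighted limit is an explicit constant ($\alpha^{N/(N-3)}$ when $N\neq 3$ and $e^{\alpha}$ when $N=3$), the inequality in the statement is in fact an equality. I would close the proof with that remark.
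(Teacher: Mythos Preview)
Your argument is correct and is exactly what the paper intends: the corollary is stated without proof as an immediate consequence of the explicit formulas (\ref{eq11123}) and (\ref{eq11123A}), and your direct evaluation $\rho(t,0)(T-Ct)^{N}=y(0)^{N/(N-3)}=\alpha^{N/(N-3)}$ (resp.\ $e^{\alpha}$ when $N=3$) is precisely that computation. Your closing observation that the limit is in fact an explicit positive constant, so equality holds, is a nice sharpening.
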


\begin{remark}
Besides, if we consider the Navier-Stokes equations with the repulsive force
in radial symmetry in $R^{N}$ $(N\neq1)$,%
\begin{equation}
\left\{
\begin{array}
[c]{rl}%
\rho_{t}+V\rho_{r}+\rho V_{r}+{\normalsize \dfrac{N-1}{r}\rho V} &
{\normalsize =0,}\\
\rho\left(  V_{t}+VV_{r}\right)  & {\normalsize =+}\dfrac{\alpha(N)\rho
}{r^{N-1}}%
{\displaystyle\int_{0}^{r}}
\rho(t,s)s^{N-1}ds+[\kappa\rho^{\theta}]_{r}\left(  \frac{N-1}{r}%
V+V_{r}\right)  +(\kappa\rho^{\theta})(V_{rr}+\dfrac{N-1}{r}V_{r}-\dfrac
{N-1}{r^{2}}V)
\end{array}
\right.
\end{equation}
the special solutions are:\newline for $N\geq2$ and $N\neq3$, with
$\theta=(2N-3)/N$,
\begin{equation}
\left\{
\begin{array}
[c]{c}%
\rho(t,r)=\left\{
\begin{array}
[c]{c}%
\dfrac{1}{(T-Ct)^{N}}y(\frac{r}{T-Ct})^{\frac{N}{N-3}},\text{ for }\frac
{r}{T-Ct}<Z_{0};\\
0,\text{ for }Z_{0}\leq\frac{r}{T-Ct}%
\end{array}
\right.  ,\text{ }V{\normalsize (t,r)=}\dfrac{-C}{T-Ct}{\normalsize r;}\\
\ddot{y}(z){\normalsize +}\dfrac{N-1}{z}\dot{y}(z)-{\normalsize \dfrac
{\alpha(N)(N-3)}{(2N-3)CN\kappa}y(z)}^{N/(N-3)}{\normalsize =0,}\text{
}y(0)=\alpha>0,\text{ }\dot{y}(0)=0,
\end{array}
\right.
\end{equation}
the finite $Z_{0}$ is the first zero of $y(z)$;\newline for $N=3$, with
$\theta=1$,%
\begin{equation}
\left\{
\begin{array}
[c]{c}%
\rho(t,r)=\dfrac{1}{(T-Ct)^{3}}e^{y(r/(T-Ct))}\text{, }V{\normalsize (t,r)=}%
\dfrac{-C}{T-Ct}{\normalsize r;}\\
\ddot{y}(z){\normalsize +}\dfrac{2}{z}\dot{y}(z)-{\normalsize \dfrac{4\pi
}{CN\kappa}e}^{y(z)}{\normalsize =0,}\text{ }y(0)=\alpha>0,\text{ }\dot
{y}(0)=0.
\end{array}
\right.
\end{equation}

\end{remark}

\end{document}